\documentclass[prl,10pt,twocolumn,superscriptaddress,floatfix,showpacs]{revtex4-2}

\usepackage[utf8]{inputenc}  
\usepackage[T1]{fontenc}   
\usepackage[british]{babel} 
\usepackage[scaled=0.86]{berasans} 
\usepackage[colorlinks=true, allcolors=blue, urlcolor=blue]{hyperref} 
\usepackage{graphicx} 
\usepackage[babel]{microtype} 
\usepackage{amsmath,amssymb,amsthm,bm,amsfonts,mathrsfs,bbm} 

\usepackage{xspace} 
\usepackage{pgfplots}
\usepackage{xcolor,colortbl}
\usepackage{array}
\usepackage{bigstrut}
\usepackage{comment}

\newcommand{\bpm}{\begin{pmatrix}}
	\newcommand{\epm}{\end{pmatrix}}

\newcommand{\ket}[1]{| #1 \rangle}
\newcommand{\bra}[1]{\langle #1|}

\newcommand{\ketbra}[1]{\ket{#1}\!\bra{#1}}
\newcommand{\be}{\begin{equation}}
	\newcommand{\ee}{\end{equation}}
\newcommand{\bea}{\begin{eqnarray}}
	\newcommand{\eea}{\end{eqnarray}}
\newcommand{\bes}{\begin{equation*}}
	\newcommand{\ees}{\end{equation*}}
\newcommand{\bean}{\begin{eqnarray*}}
	\newcommand{\eean}{\end{eqnarray*}}

\newtheorem*{thm*}{Theorem}

\newtheorem*{lem*}{Lemma}

\newtheorem*{lipschitzLem*}{Lemma \ref{lipschitz}}
\newtheorem*{lipschitzCubeLem*}{Lemma \ref{lipschitzCube}}
\newtheorem*{pgmNearlyOptimalThm*}{Theorem \ref{pgmNearlyOptimal}}

\newtheorem{theorem}{Theorem}
\newtheorem{proposition}[theorem]{Proposition}
\newtheorem{lemma}[theorem]{Lemma}

\theoremstyle{definition}

\newcommand{\no}[1]{\left\|#1\right\|} 
\newcommand{\tr}[1]{\mathrm{tr}\left[#1\right]} 
\newcommand{\id}{\mathbbm{1}} 

\newcommand{\parti}{\mathcal{P}} 
\newcommand{\M}{\mathsf{M}}
\newcommand{\N}{\mathsf{N}}

\newcommand{\ppre}{P_{\mathrm{pre}}}
\newcommand{\ppost}{P_{\mathrm{post}}}
\newcommand{\pposta}{P_{\mathrm{post}}^{\mathrm{adap}}}

\newcommand{\paux}{P_{\mathrm{aux}}}

\newcommand{\xaux}{X_{\mathrm{aux}}}

\begin{document}
\title{Perfect Discrimination of Non-Orthogonal Quantum States via Adaptive Post‑Measurement Queries}

\author{Hanwool Lee }
\email{hanwool.h.lee@jyu.fi}

\author{Teiko Heinosaari}
\email{teiko.heinosaari@jyu.fi}
\affiliation{ Faculty of Information Technology, University of Jyväskylä, Finland}

\begin{abstract}
A set of pairwise non-orthogonal quantum states cannot be perfectly discriminated, and this remains true even when one bit of classical partial information is available prior to the measurement. 
Contrary to the usual intuition that earlier information is at least as valuable as later information, we show that the same bit can be more useful when it arrives after the measurement.
We present a framework for state discrimination in which 
the sender provides classical information in response to a request from the receiver, which we refer to as a \textit{query}. 
In some cases, pairwise non-orthogonal states can be perfectly discriminated when the query is allowed to depend on the measurement outcome.
We give a general method for finding the optimal strategy in this setting, which turns out to be the standard minimum-error discrimination problem for an auxiliary ensemble.

\end{abstract}

\maketitle

\section{Introduction}

One of the fundamental facts in quantum information is that two pure states can be perfectly distinguished only if they are orthogonal.
Non-orthogonal states cannot be distinguished without some probability of error.
A state discrimination task can be eased by giving a possibility to combine quantum information with partial classical information \cite{BaWeWi08,GoWe10,CaHeTo18,carmeli2022quantum}.
For instance, if the task is to discriminate several quantum states, we can think of allowing one yes-no question before one has to guess the correct state.
A key characteristic of this type of question is its timing, whether it occurs before or after the measurement has been carried out.

It is clear that a question asked before the measurement is useful and effective, as one can simply rule out half of the states and choose the measurement accordingly, i.e.,  to be the optimal measurement for the smaller, remaining subset of states.
A series of earlier investigations \cite{VaAhAl87,heinosaari2023anticipative, heinosaari2025metainformation,carmeli2025postponing} have shown that a question asked after the measurement has been performed can also be effective. In that case, the crucial point is that one decides the question before fixing the measurement. 
The optimal measurement is then called \emph{anticipative}, as it anticipates forthcoming partial classical information.

Two immediate observations can be made. First, post‑measurement information cannot outperform pre‑measurement information, since any classical information could simply be withheld until after the measurement. Second, neither approach enables perfect discrimination of a set of pure quantum states, if they are all pairwise non-orthogonal.
This follows from the fact that pre-measurement information leads to the usual kind of state discrimination, but of a smaller subset; hence the standard limitations hold for that smaller subset.

In the current work, we show that a small adaptation to the previously studied measurement strategies can have a significant and surprising effect. In our framework, the query that is asked after the measurement can depend on the obtained measurement outcome. This seemingly slight change has, in fact, dramatic consequences.
First, we demonstrate that with an outcome-dependent query, pairwise non-orthogonal states can, in some cases, be perfectly distinguished.
Second, we find that prior information is not necessarily better than post-measurement information once the query is allowed to adapt. 
Third, we show that the previously known dimensional bound \cite{PhysRevA.99.020102} on the number of perfectly distinguishable states can be exceeded with an adaptive post-measurement query.
The main result of this work, leading to the listed observations, is a general method for finding the optimal post-measurement strategy. 

As quantum state discrimination is a necessary step in various information processing protocols \cite{BaCr09,Bergou10,BaKw15}, we expect that our method can be useful in hybrid classical-quantum protocols, providing a maximally effective way to combine classical and quantum information.

\begin{figure}
    \centering
    \includegraphics[width=0.9\linewidth]{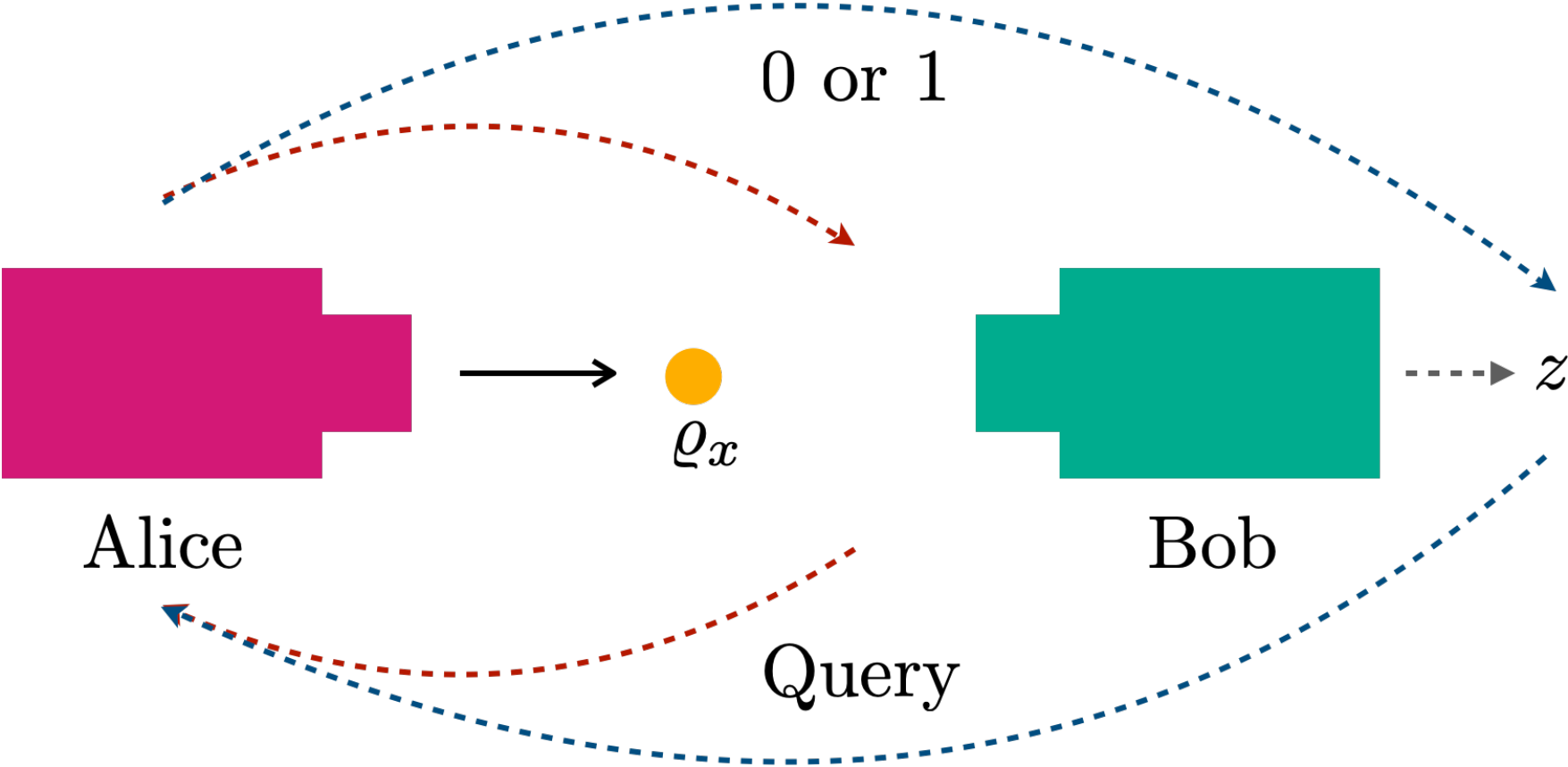}
    \caption{A scenario of query-assisted quantum state discrimination is illustrated. Bob sends a query to Alice, and Alice replies with a bit. The timing of the query and the classical information relative to the measurement significantly affects the optimal strategy and the achievable success probability.  Both the query and the bit can be exchanged either before (red) or after (blue) the measurement. 
    Interestingly, sending the query after the measurement can be more beneficial.  }
    \label{fig:placeholder}
\end{figure}

\section{State discrimination with a query}
In the standard quantum state discrimination task, a symbol $x \in X=\{1,\ldots n\}$ is drawn with prior probability $q_x$, and Alice sends the receiver, Bob, the corresponding state $\varrho_x$. 
In minimum-error discrimination, the receiver aims to minimize the average error probability, and the maximum \textit{success probability} is given by
\bea
P= \max_{\M}  \sum_{x=1}^n q_x \tr{\varrho_x \M_x}
\eea
where $\M$ is a measurement, mathematically described as a positive-operator valued measure (POVM), i.e. $\sum_x \M_x=\id$, and $\M_x \geq 0$ for $x \in X$. 

We consider a variant of this task, in which the receiver can ask an auxiliary question, called a \textit{query}, and get a truthful answer to it.
The sender will answer this question by sending classical information to the receiver. 
The task can obviously become trivial if the amount of classical information from the sender to the receiver is not limited. For instance, if Alice can send $\lceil \log n \rceil$ bits of information, then she can simply tell the correct symbol to Bob.
For this reason, we focus on the case where Alice can only send one bit of classical information, so Bob must choose his question carefully to maximize the guessing probability. 
We restrict to a single bit for clarity, though the framework applies equally to any number of bits below $\lceil \log n \rceil$.

In our framework, the optimal discrimination strategy is characterized by three operations: a query, a measurement, and classical post-processing, whose relative timing significantly affects the guessing probability. Mathematically, a binary query corresponds to a partition, denoted by $\parti$, of the symbol set $X$ into two subsets,
\bea \label{eq:parti}
\parti: X=X_0 \cup X_1. 
\eea
Posing a query means that Bob chooses a partition and informs Alice of it; Alice then truthfully reports back which subset, $X_0$ or $X_1$, contains the true symbol by sending a bit.

As shown in the earlier works \cite{heinosaari2023anticipative,heinosaari2025metainformation}, the timing of classical partial information plays a critical role in discrimination strategy. Two scenarios emerge depending on the timing of the classical information and a query. In a pre-measurement scenario, both the query and Alice's answer are exchanged prior to measurement. In this case, knowing already which subset contains the symbol, Bob performs the standard minimum-error discrimination for the subensemble corresponding to the subset $X_\ell$, where  $\ell \in \{ 0,1\}$ is the classical information he receives. The success probability  in the pre-measurement scenario, denoted by $\ppre$, is obtained by maximizing over all $2^{n-1}$ binary partitions $\parti$ and the  measurements $\M$ and $\N$ for each subensemble $\mathcal{E}_0$ and $\mathcal{E}_1$, 
\bea \label{eq:ppre}
\ppre=\max_{ \parti, \M,\N} \sum_{x\in X_0} q_x \tr{\varrho_x \M_x}+\sum_{x\in X_1} q_x \tr{\varrho_x \N_x} 
\eea
where $\sum_{x\in X_0}\M_x=\sum_{x\in X_1}\N_x=\id$.   

Previous works \cite{BaWeWi08,PhysRevA.99.020102,heinosaari2023anticipative,heinosaari2025metainformation} have studied a post-measurement scenario in which the query is fixed before the measurement, but Alice's answer only arrives after it. Bob must therefore decide on his query without knowing what outcome he will get — we call this an outcome-independent query. Since he knows the classical information is coming later, Bob performs an anticipative measurement $\M=\{\M_z\}$ that maximizes his guessing probability with classical post-processing. Once he has both the measurement outcome $z$ and the classical information $\ell$, he simply guesses the most likely symbol, so the post-processing is $f(\ell,z)=\arg \max_{x} \Pr(x|z,\ell)$. The success probability with the outcome-independent post-measurement query is 
\begin{equation} \label{eq:ppost}
\begin{split}
\ppost=&\max_{\M, \parti} \sum_{\ell=0,1} \sum_{x \in X_\ell } q_x \tr{\varrho_x \sum_{z:  f(\ell, z)=x} \M_z}\\
=&\max_{\M, \parti} \sum_z    \sum_{\ell=0,1}   q_{f(\ell,z)} \tr{\varrho_{f(\ell,z)} \M_z}
\end{split}
\end{equation}
where the same partition $\parti$ is used for every outcome $z$. It was demonstrated in \cite{PhysRevA.99.020102} that one can even have $\ppost=1$, but this requires that the states belonging to the same subsets in the partitioning are mutually orthogonal. Other examples with $\ppost=\ppre$ were presented in \cite{heinosaari2025metainformation}.
We remark that $\ppre\geq \ppost$ since pre-measurement information can be kept and used after the measurement. 

We now introduce a third, previously unexplored scenario: the query is again posed after the measurement, but, unlike in Eq. \eqref{eq:ppost}, it may depend on the outcome $z$ obtained. We call this an \textit{adaptive} post-measurement query, and denote its success probability by $\pposta$:
\bea \label{eq:pposta}
\pposta=\max_{\M} \sum_z \max_{\{\parti_z\}}   \sum_{\ell=0,1}   q_{f(\ell,z)} \tr{\varrho_{f(\ell,z)} \M_z}
\eea
where the partitions $\{\parti_z\}$ now depend on $z$. 
As we will see, this small tweak makes a big difference.

We illustrate the above scenarios with a simple example where non-orthogonal pure states can be perfectly discriminated only when the query is made after the measurement. Consider a symbol set $X=\{1,2,3\}$ with uniform prior probability, $q_x=\frac{1}{3}$, and the qubit states being
\bea \label{eq:trine}
\psi_x=\frac{1}{\sqrt{2}} \left (
\ket{0}+e^{2 \pi i x/ 3}\ket{1} 
\right )
, \quad x=1,2,3\, .
\eea
These states are also known as the trine states. Since none of the pairs is orthogonal, 
no fixed partition can be used to perfectly discriminate the states, regardless of whether the query is asked before or after the measurement. As all pairwise overlaps are identical, any partition is equally good. For instance, the partition $X_0=\{1\}$ and $X_1=\{2,3\}$ gives the success probability $\ppre=\ppost = \frac{4+\sqrt{3}}{6} \simeq 0.96$ in Eqs. \eqref{eq:ppre} and \eqref{eq:ppost}.  

However, if Bob sends the query after he has obtained a measurement outcome, he can discriminate the states perfectly. The strategy he chooses is known as \textit{antidiscrimination} \cite{CaFuSc02, PuBaRu12,HeKe18}, with the measurement defined as $\M_x=\frac{2}{3}\psi^{\perp}_x \psi_x^{\perp*}$. An outcome $x$ rules out the state $\psi_x$ with certainty, leaving Bob to identify which of the other two states was sent. 
By separating them into different subsets and asking the corresponding question, he obtains $\pposta=1$.

\section{The optimal measurement with adaptive post-measurement query}
In the setting of adaptive post-measurement query, Bob must choose a particular kind of measurement for the task to succeed, leading to the maximal guessing probability $\pposta$.
It is therefore critical to have a recipe for finding the optimal measurement.
Our first main result is that $\pposta$ can be found by considering standard minimum-error discrimination applied to a different state ensemble, which we call the \textit{auxiliary ensemble}. 
First, we define a set of ordered pairs, 
\bea
X_{\mathrm{aux}}=\{ (x,y) \in X \times X| x < y \}
\eea
which consists of a total of $n(n-1)/2$ elements. 
The auxiliary ensemble collects all pairwise mixtures of the original states,
\bea \label{eq:auxens}
\mathcal{E}_{\mathrm{aux}}=\{q_{(x,y)}, \varrho_{(x,y)}\}_{(x,y) \in X_{\mathrm{aux}}}
\eea
where
\bea 
\varrho_{(x,y)}=\frac{q_{x} \varrho_{x}+q_{y} \varrho_{y}}{q_{x} + q_{y}}, \quad q_{(x,y)}=\frac{q_{x}+q_{y}}{n-1}\, .
\eea
We denote by $P_\mathrm{aux}$ the success probability for discriminating the states in the auxiliary ensemble in the standard sense,
\bea
P_\mathrm{aux}=\max_{\M} \sum_{(x,y) \in X_{\mathrm{aux}}} q_{(x,y)} \tr{\varrho_{(x,y)} \M_{(x,y)}}
\eea
where $\M=\{\M_{(x,y)}\}_{(x,y) \in X_{\mathrm{aux}}}$ is a measurement. The following theorem shows that the optimal adaptive post-measurement strategy is directly related to standard minimum-error discrimination of the auxiliary ensemble. We note that this result resembles Prop. 1 in Ref. \cite{CaHeTo18}, though there the query is fixed and does not depend on the measurement outcome. 
\begin{figure}
    \centering
    \includegraphics[width=0.5\linewidth]{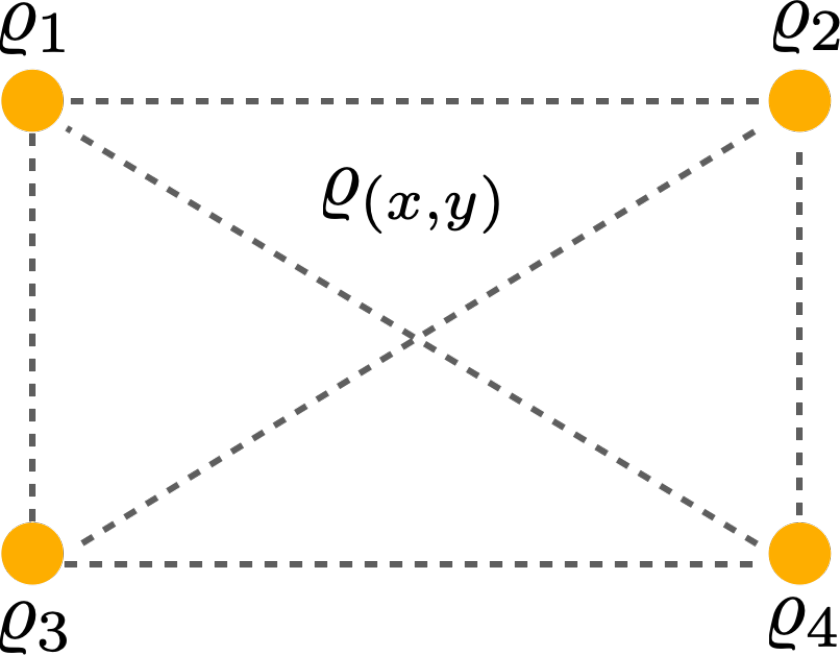}
    \caption{A graphical representation of the auxiliary ensemble for $n=4$ is shown. The
    auxiliary ensemble consists of all pairwise mixtures of the
    original states. }
    \label{fig:placeholder}
\end{figure}
\begin{theorem}\label{thm1}
    The success probability of the discrimination task with an adaptive post-measurement query is
    \bea
    \pposta =(n-1) P_{\mathrm{aux}}.
    \eea
     The corresponding optimal measurement is precisely the minimum-error measurement for the auxiliary ensemble $\mathcal{E}_{\mathrm{aux}}$ in Eq. \eqref{eq:auxens}.
\end{theorem}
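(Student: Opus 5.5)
The plan is to reduce the adaptive problem outcome by outcome. For a fixed outcome $z$, I would solve the inner maximisation in Eq.~\eqref{eq:pposta} over partitions $\parti_z$ and the post-processing $f$ in closed form. The claim is that this inner maximum equals
\begin{equation*}
\max_{(x,y)\in \xaux}\big(q_x \tr{\varrho_x \M_z}+q_y\tr{\varrho_y\M_z}\big)=(n-1)\max_{(x,y)\in\xaux} q_{(x,y)}\tr{\varrho_{(x,y)}\M_z}.
\end{equation*}
The equality on the right is just the definition of $\varrho_{(x,y)}$ and $q_{(x,y)}$: the factors $q_x+q_y$ cancel, leaving the factor $n-1$.

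To prove the inner maximum equals the pair maximum, I would argue both directions. For ``$\le$'': given a partition $X_0\cup X_1$, the optimal post-processing picks $f(0,z)\in X_0$ and $f(1,z)\in X_1$. Since $X_0\cap X_1=\emptyset$, these are two distinct symbols. The sum $\sum_\ell q_{f(\ell,z)}\tr{\varrho_{f(\ell,z)}\M_z}$ is therefore the value of some unordered pair $\{x,y\}$ with $x\neq y$. If one subset is empty, only one term appears; positivity of the terms gives the bound anyway. For ``$\ge$'': given any pair $x<y$, take any partition with $x\in X_0$ and $y\in X_1$, and set $f(0,z)=x$, $f(1,z)=y$. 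This attains the pair value, and $f$ is at worst suboptimal, so the maximum is at least the pair value.

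Substituting this into Eq.~\eqref{eq:pposta} gives
\begin{equation*}
\pposta=(n-1)\max_{\M}\sum_z \max_{(x,y)\in\xaux} q_{(x,y)}\tr{\varrho_{(x,y)}\M_z}.
\end{equation*}
It remains to show that this equals $(n-1)\paux$, which is the standard fact that allowing arbitrary outcome sets plus a deterministic relabelling gives nothing beyond measurements indexed directly by the hypotheses.

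Concretely, for the direction $\pposta\le(n-1)\paux$, I would take an optimal $\M$ and let $g(z)\in\xaux$ be the maximising pair for each $z$. Then define $\M'_{(x,y)}=\sum_{z:g(z)=(x,y)}\M_z$. This is a POVM on $\xaux$, and by linearity of the trace its auxiliary success probability equals the sum above. Conversely, any POVM $\{\M_{(x,y)}\}$ on $\xaux$ is an admissible measurement in the adaptive scenario. On outcome $(x,y)$, Bob separates $x$ and $y$ into different subsets and guesses accordingly, which achieves at least $(n-1)\sum q_{(x,y)}\tr{\varrho_{(x,y)}\M_{(x,y)}}$. Together these give equality, and they also show that the minimum-error measurement of $\mathcal{E}_{\mathrm{aux}}$, used with this query rule, is optimal.

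The only delicate point is the per-outcome reduction, specifically that the two guessed symbols are automatically distinct and that every pair is realisable by some partition. The rest is bookkeeping.
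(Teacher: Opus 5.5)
Your proposal is correct and follows essentially the same route as the paper: you reduce each outcome $z$ to its best pair of symbols (the paper's top-two symbols $\mathsf{1}(z),\mathsf{2}(z)$), then coarse-grain the outcomes by their selected pair into a POVM on $\xaux$, with the factor $n-1$ coming from the normalisation of $q_{(x,y)}$. You are slightly more explicit than the paper about the reverse inequality (any POVM on $\xaux$ is admissible in the adaptive scenario) and about the degenerate empty-subset case, both of which the paper leaves implicit.
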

\begin{proof}
Let us first fix an $m$-outcome measurement $\M=\{\M_z\}_{z=1}^m$, outcome-dependent partitions $\{\parti_z\}_{z=1}^m,$ and a post-processing map $f$. The success probability in Eq. \eqref{eq:pposta} can then be written as
\bea
&&\pposta(\M,\{\parti_z\},f)= \nonumber \\
 &&\sum_{z=1}^m \tr{( q_{f(0,z)} \varrho_{f(0,z)}+q_{f(1,z)} \varrho_{f(1,z)}) \M_z}. \label{eq:thm1proofsucc}
\eea
By definition,
\bea
\pposta=\max_{\M, \{\parti_z\}, f} \pposta(\M,\{\parti_z\},f).
\eea
An upper bound on the success probability in Eq. \eqref{eq:thm1proofsucc} is derived as follows. For each $z$, let $\mathsf{1}(z)$ and $\mathsf{2}(z)$ denote the two symbols in $X$ with the largest values of $q_x \tr{\varrho_x \M_z}$, ordered so that 
\begin{equation*}
q_{\mathsf{1}(z)} \tr{\varrho_{\mathsf{1}(z)} \M_z} \geq q_{\mathsf{2}(z)} \tr{\varrho_{\mathsf{2}(z)} \M_z}\geq q_x \tr{\varrho_x \M_z}
\end{equation*}
for all other $x$. 
Then
\begin{equation*}
\pposta(\M,\{\parti_z\},f) \leq \sum_{z=1}^m \tr{( q_{\mathsf{1}(z)} \varrho_{\mathsf{1}(z)}+q_{\mathsf{2}(z)} \varrho_{\mathsf{2}(z)}) \M_z}. 
\end{equation*}
This upper bound is attained by choosing an outcome-dependent partition $\parti_z$ that separates $\mathsf{1}(z)$ and $\mathsf{2}(z)$ into $X_0$ and $X_1$, respectively, together with the post-processing $f(0,z)=\mathsf{1}(z)$, $f(1,z)=\mathsf{2}(z)$. Define
\bean
Z_{(x,y)}=\{z|(x,y)=(\mathsf{1}(z),\mathsf{2}(z)) \text{ or } (x,y)=(\mathsf{2}(z),\mathsf{1}(z))\} 
\eean
for $(x,y) \in \xaux$. Then 
\bean
\pposta&=&\max_\M \sum_{z=1}^m \tr{( q_{1(z)} \varrho_{\mathsf{1}(z)}+q_{2(z)} \varrho_{2(z)}) \M_z}\\
&=&\max_\M \sum_{(x,y) \in \xaux} \tr{(q_x \varrho_x + q_y \varrho_y) \sum_{z \in Z_{(x,y)}}\M_{z}}\\
&=&(n-1)  \max_\M \sum_{(x,y) \in \xaux}  q_{(x,y)}\tr{ \varrho_{(x,y)} \M_{(x,y)}}\\
&=&(n-1)\paux
\eean
where we denoted $\M_{(x,y)}=\sum_{z \in Z_{(x,y)}}\M_{z}$, and the factor $n-1$ comes from normalization of $q_{(x,y)}$, i.e. $\sum_{(x,y) \in \xaux} q_{(x,y)}=1$.  
\end{proof}
Based on Theorem \ref{thm1}, we can now infer the optimal discrimination strategy. Bob performs a measurement that is optimal for the auxiliary ensemble $\mathcal{E}_{\mathrm{aux}}$. 
He records an outcome $z=(x,y)$ and based on that, chooses a partition $X=X_0 \cup X_1$ such that $x\in X_0$ and $y \in X_1$. He guesses that the initial symbol is $x$ or $y$ based on the classical information $\ell=0$ or $1$, respectively.

We now revisit the trine states in Eq. \eqref{eq:trine}, modified by lifting them away from the equator by an angle $\theta$,
\bea
\psi_x=
\cos\frac{\theta}{2}\ket{0}+e^{2 \pi i x/3}\sin\frac{\theta}{2}\ket{1}, \quad x=1,2,3,
\eea
such that the states can no longer be perfectly antidiscriminated. For $\ppre$, any partition into non-empty subsets is equally good. To find $\pposta$, one needs to consider the optimal discrimination of the auxiliary ensemble in Eq. \eqref{eq:auxens}, which yields the strict inequality
\bea
\ppre=\frac{2}{3}+\frac{\sqrt{3}}{6}\sin\theta <\frac{2}{3}+\frac{\sin\theta}{3} =\pposta .
\eea
The details are given in the Supplemental Material.
This shows that, once the query is allowed to adapt to the measurement outcome, receiving classical information after the measurement can be strictly more advantageous than receiving it before, even though, as noted earlier, this is never true for a fixed, outcome-independent query.

The previous examples are easy to understand, as there are three states and the antidiscriminating measurement is useful.
The method is, however, not limited to those situations. 
The following example demonstrates that whether a set of states can be perfectly discriminated with the aid of an adaptive post-measurement query is an intriguing property.
Let us consider the following four qutrit states,
\begin{align} \label{eq:fourqutrits}
    \psi_{1,2} = \frac{1}{\sqrt{|c|^2+1}} \begin{pmatrix}  \pm c \\ 0 \\ 1 \end{pmatrix}, 
    \psi_{3,4} = \frac{1}{\sqrt{|c|^2+1}} \begin{pmatrix} 0 \\ \pm c \\ 1 \end{pmatrix} \, 
\end{align}
where $c$ is a nonzero complex number with $|c| \neq 1$.
All states are pairwisely non-orthogonal, hence $\ppre<1$ for any binary query.
The auxiliary state ensemble consists of six states, meaning that the optimal measurement has six outcomes. 
For $|c|> 1$, each measurement outcome excludes two of the four states, so a suitable query lets Bob identify the correct symbol with certainty, giving $\pposta=1$. By contrast, for $|c|< 1$ one can show that $\pposta<1$; see the Supplemental Material for the details.

\section{ The limitation of post-measurement query }
In the following, we establish an upper bound for $\pposta$ that can be derived directly from the auxiliary ensemble, describing a basic limitation in state discrimination assisted by a post-measurement query.

\begin{proposition}\label{prop2}
    Let 
    \bea \label{eq:lambda}
    \lambda := \max_{(x,y) \in X_\mathrm{aux}}q_{(x,y)}  \no{\varrho_{(x,y)}}_\infty 
    \eea
    where $\no{\cdot}_\infty$ denotes the operator norm. Then
    \bea
    \pposta \leq d(n-1)\lambda. \label{eq:boundposta}
    \eea
   The upper bound is attained if and only if there exists a measurement $\M$ such that 
   \bea
   q_{(x,y)} \varrho_{(x,y)} \M_{(x,y)}= \lambda \M_{(x,y)} \label{eq:prop2cond}
   \eea
   for all $(x,y) \in \xaux$. 
\end{proposition}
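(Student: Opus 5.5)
By Theorem \ref{thm1}, $\pposta=(n-1)\paux$, so it suffices to show $\paux\le d\lambda$ and to characterize equality. Abbreviate $A_{(x,y)}:=q_{(x,y)}\varrho_{(x,y)}$. Each $A_{(x,y)}$ is positive semidefinite with $\no{A_{(x,y)}}_\infty\le\lambda$ by the definition \eqref{eq:lambda}. This is the standard operator-norm bound for minimum-error discrimination, applied to the auxiliary ensemble.

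\textbf{Upper bound.} Fix any measurement $\M=\{\M_{(x,y)}\}$ on the auxiliary outcome set. For each $(x,y)$, $\tr{A_{(x,y)}\M_{(x,y)}}=\tr{\M_{(x,y)}^{1/2}A_{(x,y)}\M_{(x,y)}^{1/2}}$. The operator inequality $A_{(x,y)}\le\lambda\id$ then gives $\tr{A_{(x,y)}\M_{(x,y)}}\le\lambda\,\tr{\M_{(x,y)}}$. Summing over $(x,y)\in\xaux$ and using $\sum\M_{(x,y)}=\id$ gives $\sum\tr{A_{(x,y)}\M_{(x,y)}}\le\lambda\,\tr{\id}=d\lambda$. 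Maximizing over $\M$ yields $\paux\le d\lambda$, and hence \eqref{eq:boundposta}.

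\textbf{Equality.} The bound is attained iff some $\M$ makes every inequality in the sum tight, i.e.\ $\tr{(\lambda\id-A_{(x,y)})\M_{(x,y)}}=0$ for all $(x,y)$. Both factors are positive semidefinite, so the trace vanishes iff $\M_{(x,y)}^{1/2}(\lambda\id-A_{(x,y)})\M_{(x,y)}^{1/2}=0$. This is equivalent to $(\lambda\id-A_{(x,y)})^{1/2}\M_{(x,y)}^{1/2}=0$, hence to $(\lambda\id-A_{(x,y)})\M_{(x,y)}=0$, which is exactly \eqref{eq:prop2cond}. Conversely, if \eqref{eq:prop2cond} holds, taking traces and summing gives $\sum\tr{A_{(x,y)}\M_{(x,y)}}=\lambda\sum\tr{\M_{(x,y)}}=d\lambda$. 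So $\paux=d\lambda$ and $\pposta=d(n-1)\lambda$.

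One point needs care. Theorem \ref{thm1}'s reduction allows an arbitrary $m$-outcome measurement, but it coarse-grains to the auxiliary outcomes $\M_{(x,y)}$. So ``attained'' should be read via the auxiliary measurement, and the iff statement concerns measurements indexed by $\xaux$. This matches the statement. The main technical step is the implication from zero trace to the operator equation, which uses the standard fact that $\tr{PQ}=0$ for $P,Q\ge0$ forces $PQ=0$.
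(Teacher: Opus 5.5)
Your proof is correct and follows the paper's route. The paper obtains Proposition~\ref{prop2} by combining Theorem~\ref{thm1} with the standard operator-norm bound for minimum-error discrimination (Prop.~1 of \cite{carmeli2022quantum}) applied to the auxiliary ensemble. Your argument, the termwise bound $\tr{A_{(x,y)}\M_{(x,y)}}\le\lambda\,\tr{\M_{(x,y)}}$ together with the fact that $\tr{PQ}=0$ forces $PQ=0$ for $P,Q\ge 0$, is exactly the proof behind that cited result.
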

We note that Prop.~\ref{prop2} is a direct consequence of Prop. 1 in \cite{carmeli2022quantum}, specialized here to the auxiliary ensemble. We now explicitly construct an ensemble, with $n=d+1$ states, that saturates the bound in Eq. \eqref{eq:boundposta} and achieves $\pposta>\ppre$. 

The construction uses the \textit{geometrically uniform} (GU) states \cite{eldar2001quantum},
\bea \label{eq:guensemble}
\psi_x=\frac{1}{\sqrt{d}}\sum_{j=0}^{d-1} e^{2 \pi i j x/n}\ket{j}
\eea
with uniform prior probability $q_x=1/n$ for all $x$ and $n=d+1$.  This ensemble has a cyclic group symmetry; a single unitary $U=\sum_{k=0}^{d-1} e^{2 \pi i k/n} \ketbra{k}$ generates the entire ensemble by repeated application to a single state, i.e. $U \psi_x=\psi_{x+1}$ for all $x$ and $U^n=\id$.  Since no two GU states are orthogonal, no pre-measurement query can discriminate them perfectly, so $\ppre<1$ follows immediately. What makes this ensemble special is its complete symmetry: the condition $n=d+1$ forces every pairwise overlap to share the same magnitude,
\bea
|\psi_x^*\psi_y|=\frac{1}{d},~ \forall x \neq y. \label{eq:guoverlap}
\eea

To compute $\pposta$, we construct the auxiliary ensemble in Eq. \eqref{eq:auxens}, 
whose states are
$$\varrho_{(x,y)}=\frac{1}{2}(\psi_x \psi_x^*+\psi_y \psi_y^*)$$
with uniform prior probability $q_{(x,y)}=\frac{2}{n(n-1)}$ for all $(x,y) \in \xaux$. The symmetry in Eq. \eqref{eq:guoverlap} implies that every auxiliary state $\varrho_{(x,y)}$ has the same operator norm,
$\no{\varrho_{(x,y)}}_\infty=\frac{1}{2}(1+1/d)$, with the associated eigenvector
\bea
\varphi_{(x,y)}=\sqrt{\frac{d}{2(d+1)}} (\psi_x+d \psi_y^*\psi_x \psi_y)
\eea
that satisfies $\varrho_{(x,y)} \varphi_{(x,y)}=\no{\varrho_{(x,y)}}_{\infty} \varphi_{(x,y)}$. Since every pair $(x,y)$ attains the maximum in Eq. \eqref{eq:lambda}, the measurement
\bea
\M_{(x,y)}=\frac{2}{d+1} \varphi_{(x,y)}\varphi_{(x,y)}^*, 
\eea
is the optimal measurement that gives $\pposta=1$.

An intuitive way to understand how this measurement can perfectly discriminate the GU states in Eq. \eqref{eq:guensemble}
is to view it through the lens of antidiscrimination. Note that $\tr{\varrho_{s} \M_{(x,y)}}\propto \delta_{s,x}+\delta_{s,y}$; in other words, the outcome $(x,y)$ excludes all the states except $x$ and $y$. By partitioning $x$ and $y$ into different subsets and guessing them based on classical information from Alice, Bob can achieve $\pposta=1$. 

We remark that our result is qualitatively different from the four-qutrit ensemble of Ref.~\cite{PhysRevA.99.020102}, which is a special case of the ensemble in Eq. \eqref{eq:fourqutrits} with $c=1$. Perfect discrimination with post-measurement information follows from orthogonality within a fixed partition, resulting in $\ppre=\ppost=1$. In the GU ensemble of Eq. \eqref{eq:guensemble}, by contrast, no pair of states is orthogonal, so adaptivity of the query is essential, enabling post-measurement query to outperform the pre-measurement strategy, $1=\pposta>\ppre$. 

\section{Maximal number of perfectly distinguishable states}

In communication scenarios, the maximal number of perfectly distinguishable states is an essential characteristic feature of a quantum system. Without any classical side information, the maximal number of perfectly distinguishable states is obviously $d$. Allowing an outcome-independent post-measurement query raises this limit only slightly \cite{PhysRevA.99.020102}, to 
\bea \label{eq:bound}
n \leq d+1.
\eea
Our GU construction shows that this larger bound is already achievable with $n=d+1$ states that are perfectly discriminable only with adaptive post-measurement query, giving $\pposta=1$ while $\ppre<1$. 
An immediate question is whether the bound in Eq. \eqref{eq:bound} still holds for outcome-dependent queries, or whether we can surpass it.

When $d=2$, it is clear that $\pposta<1$ for four or more states. This is because to achieve $\pposta=1$ every measurement outcome must exclude $n-2$ states, which is not possible for qubits unless $n\leq 3$.  We next show that the same limit persists one dimension higher: for $d=3$, $\pposta=1$ remains possible only for $n \leq 4$. 

\begin{proposition} \label{prop3}
    For $d\leq 3$, $\pposta=1$ is possible only if $n \leq d+1$.
\end{proposition}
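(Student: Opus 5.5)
We reduce $\pposta=1$ to a combinatorial--geometric condition on the measurement. By Theorem~\ref{thm1}, $\pposta=1$ means that there is a measurement $\M=\{\M_{(x,y)}\}$ with $\sum_{(x,y)}\tr{(q_x\varrho_x+q_y\varrho_y)\M_{(x,y)}}=1=\sum_x q_x\tr{\varrho_x}$. Since $\sum_{(x,y)}\M_{(x,y)}=\id$, this forces $\tr{\varrho_s\M_{(x,y)}}=0$ for every $s\notin\{x,y\}$ (assuming $q_s>0$). In words, every outcome with $\M_{(x,y)}\neq 0$ excludes all but at most two states. So the support of $\M_{(x,y)}$ lies in $K_{(x,y)}:=\bigcap_{s\neq x,y}\ker\varrho_s$. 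Moreover, for each $x$ the operators with $x$ among their labels must carry all the weight: $\sum_{y}\tr{\varrho_x\M_{(x,y)}}=1$. The task is then to show that for $d\le3$ and $n\ge d+2$ these requirements are incompatible. It suffices to treat $n=d+2$, because discarding states from a perfectly discriminable family keeps it perfectly discriminable. We may also assume the states are pure: replacing $\varrho_x$ by a pure state in its support only shrinks kernels' complements, so the exclusion conditions are preserved.

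The case $d=2$, $n=4$ is the qubit argument in the text. Each nonzero $\M_{(x,y)}$ must be supported on a vector orthogonal to the two other states $\psi_s,\psi_t$. In $\mathbb{C}^2$ this is possible only if $\psi_s\parallel\psi_t$, or if one of them is zero, which is impossible. If two states coincide, say $\psi_s=\psi_t$, then every outcome must contain $s$ or $t$ as a label, since otherwise it excludes both. This yields a contradiction, because such an outcome cannot separate $s$ from $t$ and so gives an error. In general, parallel states can never be perfectly discriminated, so we assume no two states are parallel.

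For $d=3$, $n=5$, with states pairwise non-parallel in $\mathbb{C}^3$, each nonzero $\M_{(x,y)}$ must be orthogonal to three states $\psi_s,\psi_t,\psi_u$. This needs $\dim\mathrm{span}\{\psi_s,\psi_t,\psi_u\}\le2$, and then $\M_{(x,y)}\propto\phi\phi^*$ with $\phi$ the unique normal vector to that plane. We will call a triple \emph{coplanar} if its span is 2-dimensional. Perfect discrimination requires that for each $x$ some active outcome involves $x$ with $\tr{\varrho_x\M}>0$. More strongly, the active rank-one projectors must sum to a full-rank operator, so they span $\mathbb{C}^3$ and there must be at least three distinct planes, each spanned by one complementary coplanar triple.

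The main combinatorial step is to analyse the coplanarity structure of 5 vectors in $\mathbb{C}^3$ and derive a contradiction. The obstruction comes from the planes themselves. If two coplanar triples share two vectors, they span the same plane, since the two shared vectors are non-parallel. That plane then contains four vectors, and its normal excludes all four, which is an error outcome. Among 5 points, any two triples share at least one element, so the active triples pairwise share exactly one vector. A pure counting check on 5 labels shows that at most two triples can pairwise meet in exactly one element, for example $\{1,2,3\}$ and $\{1,4,5\}$. A third such triple would need two elements from $\{2,3,4,5\}$ with one from each pair, and then it meets one of the others in only one element, still fine. So we must instead exploit completeness. With three planes through pairs of shared vectors, we will check that $\sum\M=\id$ cannot hold together with the condition that each state is detected only in outcomes labelled by it. For instance, $\psi_1$ lies in two of the planes and must be orthogonal to $\sum_{\text{outcomes without }1}\M$, which we try to show conflicts with $\sum\M=\id$.

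This final case analysis, checking that every admissible configuration of active triples fails either the pairwise-sharing constraint or the identity constraint, is where I expect the real work to lie. I would organise it by the number of active outcomes, which is between $3$ and $\binom{5}{2}=10$, and use the trace identity $\sum_{(x,y)}\tr{\M_{(x,y)}}=3$.
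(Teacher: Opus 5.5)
There is a genuine gap in the $d=3$ case. You leave the conclusion to an unspecified ``final case analysis'', and the sentence that sends you there is wrong. Your observation that at most two $3$-subsets of $\{1,\dots,5\}$ can pairwise meet in exactly one element is correct. Indeed, $|T\cap T'|=1$ means $T\cup T'=X$, i.e.\ the label pairs $X\setminus T$ and $X\setminus T'$ are disjoint, and a $5$-element set contains at most two disjoint pairs. A third triple taking one element from each of $\{2,3\}$ and $\{4,5\}$ necessarily meets $\{1,2,3\}$ or $\{1,4,5\}$ in two elements, so it is not ``still fine''.

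That observation, together with facts you already established, finishes the proof at once. Write $T=X\setminus\{x,y\}$. Since $\sum_{(x,y)}\M_{(x,y)}=\id$ and every nonzero $\M_{(x,y)}$ is proportional to $\phi_T\phi_T^*$, you can pick three active outcomes whose normals $\phi_{T_1},\phi_{T_2},\phi_{T_3}$ are linearly independent. Their planes are then pairwise distinct, so no two of $T_1,T_2,T_3$ share two (non-parallel) states. Hence the triples pairwise meet in exactly one element, and their labels would be three pairwise disjoint pairs in a $5$-element set, which is impossible. No organisation by the number of active outcomes and no trace identity is needed.

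Incidentally, your claim that a normal orthogonal to four states gives an ``error outcome'' is false: such an outcome can detect the fifth state without error. Choosing the triples via linearly independent normals makes that step unnecessary.

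Your handling of parallel states is also incorrect. With an adaptive query, $\psi_s=\psi_t$ causes no error, because an outcome labelled $(s,t)$ that annihilates all other states is resolved by the query. For example, take $d=2$ with $\psi_1=\psi_2=\ket{0}$, $\psi_3=\psi_4=\ket{1}$, $\M_{(1,2)}=\ketbra{0}$ and $\M_{(3,4)}=\ketbra{1}$. This gives $\pposta=1$ with $n=4>d+1$. Hence the proposition is true only for pairwise distinct states. The paper assumes this tacitly (``otherwise they must be identical pure states''), and you must impose it as a hypothesis rather than derive it. Accordingly, when you replace mixed states by pure states in their supports, choose the representatives pairwise non-parallel; this is possible because a mixed state has support of dimension at least two.

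With these two repairs your argument is correct. It is a genuinely different and shorter route than the paper's, which works column by column on the communication matrix using Lemmas~\ref{lemma1} and~\ref{lemma2} after a ``without loss of generality'' diagonal arrangement of detecting outcomes.
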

A detailed proof is given in the Supplemental Material. Prop. \ref{prop3} may hint that the bound in Eq. \eqref{eq:bound} still holds for outcome-dependent queries in general. However, the following construction of states shows that the bound can be surpassed, demonstrating that an outcome-dependent query can be more powerful than an outcome-independent one. 

The construction relies on the fact that discrimination problems in orthogonal subspaces can be combined. For $d=2m$ with $m$ a positive integer, define
\bea \label{eq:trineextended}
\psi_{3j+k}=\frac{1}{\sqrt{2}} (\ket{2j}+e^{2 \pi i k/3} \ket{2j+1})
\eea
for $j=0,...,m-1$ and $k=1,2,3$. These are $m$ trine ensembles as in Eq. \eqref{eq:trine}, one embedded in each of the $m$ orthogonal two-dimensional subspaces spanned by $\{\ket{2j}, \ket{2j+1}\}$. Within each subspace, the trine states can be perfectly antidiscriminated, and coarse-graining these antidiscriminating measurements gives a measurement that perfectly discriminates the whole ensemble, so $\pposta=1$ for $n=\frac{3}{2}d$ states. 

For odd $d=2m+1$, the same idea applies with one modification. We use the trine ensemble in Eq. \eqref{eq:trineextended} for $j=0,...,m-2$, and fill the remaining three-dimensional subspace with an ensemble of four qutrit states that achieves $\pposta=1$, such as the ensemble in Eq. \eqref{eq:fourqutrits}. 

We have thus shown a systematic construction of ensembles with
\bea
n= \left \lfloor \frac{3d}{2} \right \rfloor
\eea
states satisfying $\pposta=1$. Since no bipartition of these ensembles yields two subensembles that each contain only orthogonal states, $\ppre<1$ follows. For $d>3$, this construction exceeds the bound $n\leq d+1$ that holds for outcome-independent queries, showing that adaptivity genuinely extends what is achievable. 

A natural question is whether there exists an ensemble with $n >\left \lfloor \frac{3d}{2} \right \rfloor$ states that achieves $\pposta=1$. More broadly, the maximal number of perfectly distinguishable states attainable with a single binary query remains open. We leave both questions for future work.

\section{Discussion}
We have introduced a framework for query-assisted state discrimination, and have shown that a post-measurement query can sometimes, surprisingly, outperform any pre-measurement query, even enabling perfect discrimination of pairwise non-orthogonal states.
This runs against the common but rarely examined assumption that earlier classical information is always at least as valuable as later information. Our results show that this assumption fails once the query itself is allowed to adapt to the measurement outcome. Identifying where else this assumption is implicitly relied upon may be a fruitful direction for future work.

It would also be interesting to identify the general structure of ensembles for which $\pposta>\ppre$. 
A relevant question is whether there exists an ensemble of four qubit states with this property. We tested two natural symmetric candidates, a symmetric informationally complete ensemble \cite{ReBlScCa04} that forms a tetrahedron in the Bloch sphere and a geometrically uniform ensemble that includes the BB84 states \cite{BeBr84} as a special case. In both cases, we found  $\pposta<\ppre$ (see the Supplemental Material for the detailed calculations). 
These examples suggest that $\pposta>\ppre$  may never occur for qubit ensembles with $n\geq4$, though we leave a proof as an open question. 

Since state discrimination is a basic primitive underlying more complex information-processing tasks \cite{RevModPhys.74.145,scarani2009security,RevModPhys.82.665}, our framework contributes a new perspective on how classical information can be used most effectively in quantum information processing.

\section{Acknowledgement}
The authors acknowledge financial support from the Business Finland project BEQAH.

\bibliographystyle{apsrev4-2}
\bibliography{reference}

\newpage
\onecolumngrid
\appendix
\section*{Supplemental Material}

\subsection*{ I. Proof of Proposition 3} 
Here we prove the limit of the number of perfectly distinguishable states using an adaptive post-measurement query for $d\leq 3$. Proposition \ref{prop3} states that $\pposta=1$ is possible only if $n\leq d+1$ for $d \leq 3$. Note that its converse is trivially true since an ensemble consisting of $d$ orthogonal states and any one additional state achieves $\pposta=1$. It is also clear that when $d=2$, the maximal number of perfectly distinguishable states is $3$. This follows from the fact that each measurement outcome must exclude at least $n-2$ states, which is not possible for qubit states when $n \geq 4$. 

Consider an ensemble of $d$ dimensional quantum states
\bea
\mathcal{E}=\{q_x, \varrho_x \}_{x=1}^n
\eea
where $q_x$ denotes prior probability. A useful notion related to state discrimination is \textit{communication matrix} \cite{FrWe15}, $C=[C_{x,z}]$, whose elements denote the conditional probability of measurement outcome given state preparation, $C_{x,z}=\tr{\varrho_x \M_z}$. To achieve $\pposta=1$, a measurement must implement a communication matrix, in which each column must contain at least $n-2$ zeros. 

Before we begin the actual proof, let us establish some useful lemmas. From now on, we consider communication matrices with rank-one effects.
\begin{lemma} \label{lemma1}
    Consider three states whose support lie in a two-dimensional subspace, denoted by $\varrho_1, \varrho_2$ and $\varrho_3$. If there exists a vector $v$ which is orthogonal to two of the states, for instance $\varrho_1 v = \varrho_2 v =0$, then the vector must be orthogonal to the remaining state, $\varrho_3 v=0$.  
\end{lemma}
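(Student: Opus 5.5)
The plan is to reduce the statement to linear algebra on supports: I will show that $v$ is orthogonal to the whole two-dimensional subspace, after which the claim for $\varrho_3$ is immediate. One implicit hypothesis has to be made explicit. I will assume $\varrho_1 \neq \varrho_2$, which is the situation in which the lemma is used (distinct states of the ensemble). If $\varrho_1=\varrho_2=\psi\psi^*$, a vector orthogonal to $\psi$ inside the subspace is a counterexample.

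Let $V$ be the two-dimensional subspace containing the supports of $\varrho_1,\varrho_2,\varrho_3$, and let $P$ be the orthogonal projection onto $V$. Then $\varrho_j = P\varrho_j P$ for each $j$.

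\textbf{Step 1: the condition on $v$.} For a positive operator, $\varrho_j v = 0$ holds if and only if $v \perp \mathrm{supp}(\varrho_j)$. Indeed, $\varrho_j v=0$ gives $\langle v,\varrho_j v\rangle = \no{\varrho_j^{1/2}v}^2=0$, and conversely. So the hypothesis says that $v$ is orthogonal to $S:=\mathrm{supp}(\varrho_1)+\mathrm{supp}(\varrho_2)\subseteq V$.

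\textbf{Step 2: $S=V$.} I split into two cases.
\begin{itemize}
\item If either $\varrho_1$ or $\varrho_2$ has rank two, its support is already all of $V$.
\item Otherwise both are pure, $\varrho_1=\psi_1\psi_1^*$ and $\varrho_2=\psi_2\psi_2^*$. Since the states are distinct, $\psi_1$ and $\psi_2$ are not parallel, so they span $V$.
\end{itemize}
In either case $v\perp V$, that is, $Pv=0$.

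\textbf{Step 3: conclusion.} We get $\varrho_3 v = \varrho_3 P v = 0$.

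The only delicate point is Step 2, namely the need for the distinctness of $\varrho_1$ and $\varrho_2$. I would state this assumption explicitly in the lemma, noting that it holds for the ensembles considered in the proof of Proposition~\ref{prop3}.
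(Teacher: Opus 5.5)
Your proof is correct and is essentially the paper's argument: the supports of $\varrho_1$ and $\varrho_2$ together span the two-dimensional subspace, so $v$ is orthogonal to that subspace and is therefore annihilated by $\varrho_3$. The paper handles the degenerate case in a parenthetical remark: two states in the subspace fail to span it only if they are identical pure states. That is exactly the distinctness assumption you state explicitly.
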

\begin{proof}
    Any two states chosen from $\{\varrho_1, \varrho_2, \varrho_3\}$ span the two-dimensional subspace (otherwise they must be identical pure states). Therefore, if $\varrho_1v=\varrho_2v=0$, then $v$ must be orthogonal to the subspace, and therefore $\varrho_3v=0$. 
\end{proof}

\begin{lemma} \label{lemma2}
Consider a communication matrix $C$ with $d=3$. If there exist three zero elements in a single column, for instance $C_{1,j}=C_{2,j}=C_{3,j}=0$ for some $j$, then for all $z$, it is impossible that only two of $\{C_{1,z},C_{2,z},C_{3,z}\}$ are zeros. 
\end{lemma}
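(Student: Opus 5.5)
We work in $d=3$ with rank-one effects, as fixed just before Lemma~\ref{lemma1}. Write $\M_z=\mu_z\mu_z^*$, so that $C_{x,z}=\mu_z^*\varrho_x\mu_z$. Then $C_{x,z}=0$ exactly when $\varrho_x\mu_z=0$, that is, when $\mu_z\perp\mathrm{supp}(\varrho_x)$.

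The hypothesis $C_{1,j}=C_{2,j}=C_{3,j}=0$ says that the nonzero vector $\mu_j$ is orthogonal to $S:=\mathrm{supp}(\varrho_1)+\mathrm{supp}(\varrho_2)+\mathrm{supp}(\varrho_3)$. Since $\mu_j\neq 0$ in $\mathbb{C}^3$, we get $\dim S\leq 2$.

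We now split according to $\dim S$.

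\emph{Case $\dim S=1$.} All three states equal the same pure state. Then $C_{1,z}=C_{2,z}=C_{3,z}$ for every $z$, so these entries are either all zero or all nonzero. In particular, exactly two of them can never vanish.

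\emph{Case $\dim S=2$.} All three supports lie in a two-dimensional subspace. Suppose, for some $z$, that exactly two entries vanish, say $C_{1,z}=C_{2,z}=0$ with $C_{3,z}\neq 0$. This means $\varrho_1\mu_z=\varrho_2\mu_z=0$. By Lemma~\ref{lemma1} it follows that $\varrho_3\mu_z=0$, so $C_{3,z}=0$, a contradiction. The same argument applies to any other choice of the two vanishing entries.

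One subtlety sits in the proof of Lemma~\ref{lemma1}: it assumes that two of the states always span the full plane. If two of the states coincide as the same pure state $\psi$ and the third is distinct, the argument must be supplemented.

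\emph{Subcase: $\varrho_1=\varrho_2=\psi\psi^*$, $\varrho_3$ distinct.} If the two vanishing entries are rows $1$ and $2$, the vector $\mu_z$ is orthogonal only to $\psi$, so the conclusion can actually fail. This degenerate situation therefore has to be excluded explicitly. Presumably the statement intends distinct states, and I would add that as a standing assumption; in the application, repeated identical states are harmless or excluded. If instead the vanishing pair contains distinct states, their supports span $S$ and the argument goes through.

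\emph{Subcase: some $\varrho_x$ is mixed with support equal to $S$.} Any vanishing entry involving that state forces $\mu_z\perp S$, so the remaining entries vanish as well.

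So the main step is simply the dimension count $\dim S\leq 2$ followed by Lemma~\ref{lemma1}. The only real care needed is handling the degenerate coincident-state case, which I would rule out by an assumption of distinctness.
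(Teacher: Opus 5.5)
Your argument is correct and follows the paper's route: the all-zero column makes $\mu_j$ orthogonal to the joint support $S$, so $\dim S\le 2$, and Lemma~\ref{lemma1} then rules out a column with exactly two zeros among rows $1,2,3$. Your degenerate-case analysis is a real refinement. The paper's proof of Lemma~\ref{lemma1} tacitly discards coinciding pure states (``otherwise they must be identical pure states''), and without distinctness the lemma fails exactly as you say (e.g.\ $\varrho_1=\varrho_2=\ketbra{0}$, $\varrho_3=\ketbra{1}$ with the computational-basis measurement), so such repetitions must be excluded rather than treated as harmless: duplicating each basis state would even give $n=2d$ with $\pposta=1$, contradicting Proposition~\ref{prop3}.
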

\begin{proof}
    This lemma is a direct consequence of Lemma \ref{lemma1}. The existence of $j$ in which $C_{1,j}=C_{2,j}=C_{3,j}=0$ holds implies that the three qutrits $\{\varrho_1,\varrho_2, \varrho_3\}$ lie in two-dimensional subspace. By Lemma \ref{lemma1}, there cannot exist a column $z$ where only two of $\{C_{1,z},C_{2,z},C_{3,z}\}$ are zeros.
\end{proof}

We are now ready to prove Proposition \ref{prop3}. 
\\

\textbf{Proposition 3. }{ \it For $d\leq 3$, $\pposta=1$ is possible only if $n \leq d+1$. }
\begin{proof}
    We will prove the claim for $d=3$ by contradiction that a communication matrix satisfying $\pposta=1$ cannot exist. Suppose that $\pposta=1$ can be achieved with $n \geq 5$. Each column of the communication matrix must contain at least $n-2$ zeros. We first note that each effect $\M_z$ must be rank-one because if it is rank-two then all states that satisfy $\tr{\varrho_x \M_z}=0$ must be identical. Without loss of generality, we can write the first five columns and five rows of the communication matrix as 
\bea
C=
\begin{pmatrix}
   \star  & ~ & ~ & ~ & ~ & \ldots\\
    ~ & \star &  ~& ~ & \\
     &  & \star&  & \\
    &   &   &\star &  \\
     &   &   &   & \star\\
    \vdots
\end{pmatrix}
\eea
where each $\star$ denotes a positive value, not necessarily the same in different entries. The empty spaces are considered as non-negative numbers. Since each column can have at most two non-zero elements, we write the first column as follows, 
\bea
\begin{pmatrix}
    \star  &  &  & &  & \ldots\\
    \diamond& \star & & &\\
    0 & & \star& &\\
    0& &  & \star &\\
    0& &  & & \star\\
    \vdots
\end{pmatrix}
\eea
where $\diamond \geq 0$. According to Lemma \ref{lemma2}, the elements $C_{4,3}$ and $C_{5,3}$ cannot both be zeros. 
As this argument applies to the fourth and the fifth columns as well and each column can have at most two non-zero elements, we can fill the communication matrix as 
\bea
\begin{pmatrix}
    \star  &  & 0 &0 & 0 & \ldots\\
    \diamond& \star & 0&0 &0\\
    0 & & \star& &\\
    0& & \alpha & \star &\\
    0& & \beta & & \star\\
    \vdots
\end{pmatrix}
\eea
where exactly only one of $\alpha$ or $\beta$ is zero. However, the choice of $\alpha>0$ and $\beta=0$ violates Lemma \ref{lemma1} since $\{\varrho_1, \varrho_2, \varrho_4\}$ are in a two-dimensional subspace but $C_{1,4}=C_{2,4}=0$ and $C_{4,4}>0$. Likewise, the choice of $\alpha=0$ and $\beta>0$ violates Lemma \ref{lemma1} by analogy with the fifth column. Therefore, no communication matrix satisfying $\pposta=1$ can exist for $d=3$ and $n \geq 5$. 
\end{proof}

\subsection*{II. The optimal measurements for various ensembles }
In this section, we compute $\pposta$ and $\ppre$ of the ensembles presented in the work. The following optimality conditions \cite{Bae13} will be useful throughout. In minimum-error discrimination, one constructs a \textit{symmetric operator} $K\geq0$ such that
\bea
K-q_x \varrho_x \geq 0, \forall x \label{eq:appkkt1}
\eea
Then, a measurement $\M=\{\M_x\}$ satisfying
\bea
\tr{(K-q_x \varrho_x) \M_x}=0 \label{eq:appkkt2}
\eea
is the optimal measurement. The success probability is 
\bea
P=\sum_x q_x \tr{\varrho_x \M_x}=\tr{K}.
\eea

\subsubsection{Lifted trine ensemble}
Consider an ensemble of lifted trine states with uniform prior probability $q_x=1/3$,
\bea
\psi_x=\cos\frac{\theta}{2}\ket{0}+e^{2 \pi i x/3} \sin\frac{\theta}{2}\ket{1}, 
\eea
where $x \in X=\{1,2,3\}$ and $0<\theta\leq \pi/2$. Since the overlap between any two states is identical, any partition into two non-empty disjoint subsets is equally good. To this end, we consider a partition 
\bea
X_0=\{1,2\},X_1=\{3\}
\eea
Let us denote by $P_{\mathrm{succ}}(X_i)$ the success probability of the ensemble corresponding to $X_i$ with re-normalized prior probability. The success probability is then
\bea
\ppre&=&\frac{2}{3}P_{\mathrm{succ}}(X_0)+\frac{1}{3}P_  {\mathrm{succ}}(X_1)\\
&=&\frac{1}{3}(1+\sqrt{1-|\psi_1^*\psi_2|^2}) + \frac{1}{3}\\
&=&\frac{2}{3}+\frac{\sqrt{3}}{6}\sin\theta \label{eq:apptrineppre}
\eea
We remark that the optimal measurements for $X_0$ and $X_1$ are compatible, hence $\ppost=\ppre$. 

To find $\pposta$, we first construct the auxiliary ensemble with the auxiliary symbol set $X_{\mathrm{aux}}=\{(1,2),(1,3),(2,3)\}$ and the auxiliary state being
\bea
\varrho_{(x,y)}=\frac{1}{2}(\varrho_x + \varrho_y)
\eea
where $\varrho_x=\psi_x \psi_x^*$ and $(x,y) \in X_{\mathrm{aux}}$. Since the auxiliary ensemble has cyclic group symmetry, a natural candidate for the optimal measurement is 
\bea
&&\M_{(1,2)}=\frac{1}{3}(\id - \sigma_x)\\
&&\M_{(1,3)}=\frac{1}{3}(\id + \frac{1}{2}\sigma_x + \frac{\sqrt{3}}{2}\sigma_y)\\
&&\M_{(2,3)}=\frac{1}{3}(\id + \frac{1}{2}\sigma_x - \frac{\sqrt{3}}{2}\sigma_y)
\eea
which forms a trine measurement on the $X-Y$ plane of the Bloch sphere. To see if it is indeed the optimal measurement, we construct the symmetric operator as
\bea
K&=&\sum_{(x,y) \in \xaux} q_{(x,y)} \varrho_{(x,y)} \M_{(x,y)}\\
&=&\frac{1}{12}(2+\sin\theta) \id +\frac{1}{6}\cos\theta \sigma_z
\eea
It can be shown that $K$ satisfies both conditions in Eqs \eqref{eq:appkkt1} and \eqref{eq:appkkt2}. Since $\paux=\tr{K}$, 
\bea
\pposta=2 P_{\mathrm{aux}}=\frac{1}{3}(2+\sin\theta)
\eea
Comparing it to $\ppre$ in Eq. \eqref{eq:apptrineppre}, the lifted trine ensemble exhibits a property that adaptive post-measurement query can outperform pre-measurement query,
\bea
\pposta> \ppre
\eea
for all $\theta >0$. 

\subsubsection{Tetrahedron states}
We consider an ensemble of four qubit states that form a tetrahedron in the Bloch sphere, defined as
\bea
\psi_1=\ket{0}, \psi_2=\frac{1}{\sqrt{3}} \ket{0} + \sqrt{\frac{2}{3}}\ket{1},
\psi_{3}=\frac{1}{\sqrt{3}} \ket{0} + e^{2 \pi i /3 }\sqrt{\frac{2}{3}}\ket{1}, \psi_{4}=\frac{1}{\sqrt{3}} \ket{0} + e^{-2 \pi i /3}\sqrt{\frac{2}{3}}\ket{1}
\eea
We first consider the case where the sizes of subsets are different. Since the states have identical overlaps, without loss of generality we consider a partition 
\bea
\parti_1: X_0=\{1\}, X_1=\{2,3,4\}
\eea
Note that the subensemble for $X_1$ is a geometrically uniform ensemble. Namely, the states are generated by a unitary operator $U=\exp(i \pi \sigma_z/3)$. Then, the square-root measurement, $\M_x=\frac{1}{3}\sqrt{\tau}^{-1} \varrho_{x}\sqrt{\tau}^{-1}$, is the optimal measurement for $x \in X_1$ where $\tau=\frac{1}{3}\sum_{x\in X_1} \varrho_x$. Note that this measurement forms a trine measurement, 
\bea
\M_x=\frac{2}{3}\phi_x \phi_x^*
\eea
where $\phi_x=\frac{1}{\sqrt{2}}(\ket{0}+e^{2 \pi i (x-2)/3} \ket{1})$. This measurement yields the success probability for $X_1$
\bea
P_{\mathrm{succ}}(X_1)=\frac{1}{3}\sum_{x \in X_1} \tr{\varrho_x \M_x}=\frac{1}{3}+\frac{2 \sqrt{2}}{9}
\eea
The overall success probability using the partition $\parti_1$ is
\bea
\ppre(\mathcal{P}_1)=\frac{1}{4}P_{\mathrm{succ}}(X_0)+\frac{3}{4}P_{\mathrm{succ}}(X_1)=\frac{1}{2}+\frac{\sqrt{2}}{6}
\eea

Let us consider a partition into two disjoint subsets with equal size,
\bea
\parti_2: X_0=\{1,2\},  X_1=\{3,4\}
\eea
As both $X_0$ and $X_1$ contain states with equal overlap, the success probability with this partition is 
\bea
\ppre(\parti_2)=P_{\mathrm{succ}} (X_0)=\frac{1}{2}(1+\sqrt{1-| \psi_1^* \psi_2|^2})
=\frac{1}{2}(1+\sqrt{\frac{2}{3}}). 
\eea
Since $\ppre(\parti_2)>\ppre(\parti_1)$, we conclude that
\bea
\ppre=\ppre(\parti_2)=\frac{1}{2}(1+\sqrt{\frac{2}{3}}). \label{eq:apptetrappre}
\eea

Now let us find $\pposta$. The auxiliary symbol set contains $6$ tuples, $\xaux=\{(1,2), (1,3), (1,4), (2,3), (2,4), (3,4)\}$. We note that the auxiliary states $\varrho_{(x,y)}=\frac{1}{2}(\varrho_x + \varrho_y)$ have identical norms,
\bea
\no{\varrho_{(x,y)}}_\infty=\frac{1}{2}(1+\frac{1}{\sqrt{3}})
\eea
for all $(x,y) \in \xaux$. Let $\vec{n}_{(x,y)}$ denote the bloch vector of $\varrho_{(x,y)}$. Since $\sum_{(x,y)} \vec{n}_{(x,y)}=0$, we construct a measurement, 
\bea
\M_{(x,y)}=\frac{1}{6}(\id + \frac{\vec{n}_{(x,y)}}{\no{\vec{n}_{(x,y)}}} \cdot \sigma)
\eea
By Proposition \ref{prop2}, $\M_{(x,y)}$ is the optimal measurement for the auxiliary ensemble. Hence, the success probability with an adaptive post-measurement query is
\bea
\pposta=(n-1)P_{\mathrm{aux}}=\no{\varrho_{(x,y)}}_\infty=\frac{1}{2}(1+\frac{1}{\sqrt{3}})
\eea
Compared to $\ppre$ in Eq. \eqref{eq:apptetrappre}, the tetrahedron ensemble cannot utilize post-measurement query to overcome pre-measurement query,
\bea
\pposta<\ppre.
\eea
\subsubsection{Lifted BB84 states}
Consider a symbol set $X=\{1,2,3,4\}$ and the qubit states
\bea
\psi_x=\cos\frac{\theta}{2}\ket{0}+e^{\pi x i /2} \sin\frac{\theta}{2}\ket{1}
\eea
for $x \in X$. This ensemble forms a `lifted' $\sigma_x$ and $\sigma_y$ basis states, modified by tilting them toward the pole.
Let us first find the $\ppre$. Three different partitions can be used. Firstly, consider a partition into subsets with different sizes,
\bea
\mathcal{P}_1: X_0=\{1\}, X_1=\{2,3,4\}.
\eea
The ensemble for $X_1$ shows mirror symmetry in a way that it is mirror-reflected with respect to the $Z$ axis. It turns out that the optimal measurement is 
\bea
\M_2=\ketbra{-}, \M_3=0, \M_4=\ketbra{+}.
\eea
This can be checked by constructing the symmetric operator
\bea
K=\frac{1}{3}\sum_{x \in X_1} \varrho_x \M_x=\frac{1}{6}(1+\sin\theta)\id + \frac{1}{6}\cos\theta \sigma_z
\eea
which satisfies the optimality conditions in Eqs. \eqref{eq:appkkt1} and \eqref{eq:appkkt2}, i.e. $K-\frac{1}{3}\varrho_x \geq 0$ and $\tr{(K-\frac{1}{3}\varrho_x) \M_x}=0$ for all $x \in X_1$. Since $P_{\mathrm{succ}}(X_1)=\tr{K}=\frac{1}{3}(1+\sin\theta)$, the success probability by the partition $\parti_1$ is 
\bea
\ppre(\parti_1)=\frac{1}{4}P_{\mathrm{succ}}(X_0)+\frac{3}{4}P_{\mathrm{succ}}(X_1)=\frac{1}{2}+\frac{1}{4}\sin\theta
\eea
Next we compute the success probability using partitions into subsets with equal sizes,
\bea
&&\parti_2: X_0=\{1,2\}, X_1=\{3,4\},\\
&&\parti_3: X_0=\{1,3\}, X_1=\{2,4\}.
\eea
In both cases, each partition contains states with identical overlap, and therefore the success probability of each subset is the same. The second partition yields $\ppre(\parti_2)=\frac{1}{2}(1+\sqrt{1-|\psi_1^* \psi_2}|^2)=\frac{1}{2}(1+\frac{\sin\theta}{\sqrt{2}})$ and the third partition yields $\ppre(\parti_3)=\frac{1}{2}(1+\sqrt{1-|\psi_1^* \psi_3|^2})=\frac{1}{2}(1+\sin\theta)$. Therefore, $\parti_3$ yields the highest success probability, and we have
\bea
\ppre=\frac{1}{2}(1+\sin\theta).
\eea

Let us find the $\pposta$. The auxiliary symbol set contains $6$ elements, $\xaux=\{(1,2),(1,3),(1,4),(2,3),(2,4),(3,4)\}$. Define a measurement
\bea
\M_{(1,2)}=\frac{1}{4}(\id - \frac{1}{\sqrt{2}}\sigma_x + \frac{1}{\sqrt{2}}\sigma_y)\\
\M_{(2,3)}=\frac{1}{4}(\id - \frac{1}{\sqrt{2}}\sigma_x - \frac{1}{\sqrt{2}}\sigma_y)\\
\M_{(3,4)}=\frac{1}{4}(\id + \frac{1}{\sqrt{2}}\sigma_x - \frac{1}{\sqrt{2}}\sigma_y)\\
\M_{(1,4)}=\frac{1}{4}(\id + \frac{1}{\sqrt{2}}\sigma_x + \frac{1}{\sqrt{2}}\sigma_y)
\eea
with $\M_{(1,3)}=\M_{(2,4)}=0$. We construct the symmetric operator
\bea
K=\frac{1}{6}\sum_{(x,y)\in \xaux}  \varrho_{(x,y)} \M_{(x,y)}=\frac{1}{12}(1+\frac{\sin\theta}{\sqrt{2}})\id +\frac{1}{12}\cos\theta \sigma_z,
\eea
which satisfies the optimality conditions in Eqs. \eqref{eq:appkkt1} and \eqref{eq:appkkt2}. Therefore, the measurement defined above is the optimal measurement for the auxiliary ensemble, and this yields $\paux=\tr{K}=\frac{1}{6}(1+\frac{\sin\theta}{\sqrt{2}})$. The success probability with an adaptive post-measurement query is
\bea
\pposta=\frac{1}{2}(1+\frac{\sin\theta}{\sqrt{2}}).
\eea
Therefore, this shows that 
$$
\pposta<\ppre
$$
for all $\theta>0$. 

\subsection*{III. An ensemble of four qutrit states with $\pposta=1$ and $\ppre<1$} 
Consider four qutrit states in Eq. \eqref{eq:fourqutrits}, 
\begin{align} 
    \psi_{1,2} = \frac{1}{\sqrt{|c|^2+1}} \begin{pmatrix}  \pm c \\ 0 \\ 1 \end{pmatrix}, 
    \psi_{3,4} = \frac{1}{\sqrt{|c|^2+1}} \begin{pmatrix} 0 \\ \pm c \\ 1 \end{pmatrix} \, 
\end{align}
where $c$ is a non-zero complex number. When $c=1$, $\psi_1^* \psi_2=\psi_3^* \psi_4=0$, so a partition $X_0=\{1,2\}, X_1=\{3,4\}$ makes each subensemble perfectly distinguishable. It was shown in Ref \cite{PhysRevA.99.020102} that outcome-independent post-measurement query can yield perfect discrimination, hence $\ppre=\ppost=1$. Here we consider the case $|c| \neq 1$ so that none of the pairs is orthogonal and therefore $\ppre<1$. 

An interesting observation is that these states can be perfectly discriminated if we allow the post-measurement query to be outcome-dependent. Theorem \ref{thm1} tells us that the optimal measurement outputs six outcomes. To achieve $\pposta=1$, each outcome must exclude at least two states, which is a stronger notion of antidistinguishability. Let us denote by $\varphi_{(x,y)}$ a normalized vector that excludes any other states than $\varrho_x$ and $\varrho_y$, i.e. $\varrho_s \varphi_{(x,y)}\propto \delta_{s,x}+\delta_{s,y}$. These vectors are obtained as
$$
\varphi_{(1,2)} = \begin{pmatrix} 1 \\ 0 \\ 0 \end{pmatrix} \, , \,
\varphi_{(1,3)} = \frac{1}{\sqrt{|c|^2+2}} \begin{pmatrix} 1 \\ 1 \\ c^* \end{pmatrix}\, , \, 
\varphi_{(1,4)} = \frac{1}{\sqrt{|c|^2+2}} \begin{pmatrix} 1 \\ -1 \\ c^* \end{pmatrix} \, .
$$
$$
\varphi_{(2,3)} = \frac{1}{\sqrt{|c|^2+2}} \begin{pmatrix} -1 \\ 1 \\ c^* \end{pmatrix} \, , \,
\varphi_{(2,4)} = \frac{1}{\sqrt{|c|^2+2}} \begin{pmatrix} -1 \\ -1 \\ c^* \end{pmatrix} \, , \,
\varphi_{(3,4)} = \begin{pmatrix} 0 \\ 1 \\ 0 \end{pmatrix} \,. 
$$
The optimal measurement that achieves $\pposta=1$ must take the form of $\M_{(x,y)}=a_{(x,y)} \varphi_{(x,y)}\varphi_{(x,y)}^*$.
For this to be a valid measurement, it must be that $\sum_{(x,y) \in \xaux} a_{(x,y)}\varphi_{(x,y)}\varphi_{(x,y)}^*=\id$ and $a_{(x,y)} \geq 0$ for all $(x,y) \in \xaux$. Solving the normalization condition, we come to the conclusion that
\bea
a_{(1,2)}=a_{(3,4)}=1-\frac{1}{|c|^2}, a_{(1,3)}=a_{(1,4)}=a_{(2,3)}=a_{(2,4)}=\frac{1}{2|c|^2}+\frac{1}{4}\, .
\eea
This tells us that if $|c|<1$ then no measurement can attain $\pposta=1$. On the other hand, whenever $|c|\geq1$, we can always find suitable parameters $a_{(x,y)}$ as above, and therefore $\pposta=1$ can be achieved. 
The communication matrix is given by
$$C=
\frac{1}{|c|^2+1}\begin{pmatrix}
|c|^2-1 & 1 & 1 & 0 & 0 & 0 \\
|c|^2-1 & 0 & 0 & 1 & 1 & 0 \\
0 & 1 & 0 & 1 & 0 & |c|^2-1 \\
0 & 0 & 1 & 0 & 1 & |c|^2-1
\end{pmatrix}
$$
where the columns are in order of $(1,2),(1,3),(1,4),(2,3),(2,4),(3,4)$.

\end{document}